\newcommand{\mb}{\text{\textblank}}
\newcommand{\ignore}[1]{}
\newcommand{\N}{\ensuremath{\mathbb{N}}\xspace}
\newcommand{\Z}{\ensuremath{\mathbb{Z}}\xspace}
\newcommand{\az}{\ensuremath{A^{\Z}}\xspace}
\newcommand{\M}{\ensuremath{\mathcal{M}}\xspace}
\newcommand{\A}{\ensuremath{\mathcal{A}}\xspace}
\newcommand{\supp}{\text{supp}}
\newcommand{\ie}{\emph{i.e.}\@\xspace}
\newcommand{\etc}{\emph{etc.}\@\xspace}
\newcommand{\cfgpat}[1]{{}^{\omega}\!#1^{\omega}}
\newcommand{\set}[1]{\left\{#1\right\}}
\newcommand{\seq}{\theta}
\newcommand{\nota}[1]{}
\newcommand{\TMsym}[7]{
	\begin{tikzpicture}[scale=0.9]
		\filldraw[fill=black!10] (0,0) -- (1,0) -- (1,0.5) -- (0,0.5) -- cycle;
		\filldraw[fill=black!10] (0,0.5) -- (1,0.5) -- (1,1) -- (0,1) -- cycle;
		\filldraw[fill=black!10] (0,1) -- (1,1) -- (1,1.5) -- (0,1.5) -- cycle;
		\filldraw[fill=black!10] (0,1.5) -- (1,1.5) -- (1,2) -- (0,2) -- cycle;
		\draw (0.5,0.25) node {#1};
		\draw (0.5,0.75) node {$R$};
		\draw (0.5,1.25) node {$q_0$};
		\draw (0.5,1.75) node {$\mb$};
		\filldraw[fill=black!10] (1,0) -- (2,0) -- (2,0.5) -- (1,0.5) -- cycle;
		\filldraw[fill=black!10] (1,0.5) -- (2,0.5) -- (2,1) -- (1,1) -- cycle;
		\filldraw[fill=black!10] (1,1) -- (2,1) -- (2,1.5) -- (1,1.5) -- cycle;
		\filldraw[fill=black!10] (1,1.5) -- (2,1.5) -- (2,2) -- (1,2) -- cycle;
		\draw (1.5,0.25) node {#2};
		\draw (1.5,0.75) node {#3};
		\draw (1.5,1.25) node {#4};
		\draw (1.5,1.75) node {#5};
		\filldraw[fill=black!10] (-0.5,0) -- (0,0) -- (0,0.5) -- (-0.5,0.5);
		\filldraw[fill=black!10] (-0.5,0.5) -- (0,0.5) -- (0,1) -- (-0.5,1); 
		\filldraw[fill=black!10] (-0.5,1) -- (0,1) -- (0,1.5) -- (-0.5,1.5); 
		\filldraw[fill=black!10] (-0.5,1.5) -- (0,1.5) -- (0,2) -- (-0.5,2);
		\filldraw[fill=black!10] (2.5,0) -- (2,0) -- (2,0.5) -- (2.5,0.5);
		\filldraw[fill=black!10] (2.5,0.5) -- (2,0.5) -- (2,1) -- (2.5,1); 
		\filldraw[fill=black!10] (2.5,1) -- (2,1) -- (2,1.5) -- (2.5,1.5); 
		\filldraw[fill=black!10] (2.5,1.5) -- (2,1.5) -- (2,2) -- (2.5,2);
		\draw (1,-0.5) node {#6};
		\draw[thick,->] (0.45+#7,2.3) -- (0.45+#7,2);
	\end{tikzpicture}
}
\title{
Computational Aspects of Asynchronous CA}
\author{J\'er\^ome Chandesris\inst{2} \and Alberto Dennunzio\inst{2} \and Enrico Formenti\inst{2}\thanks{Corresponding author.} \and Luca Manzoni\inst{1}}
\date{}
\institute{
         Universit\`a degli Studi di Milano--Bicocca\\
        Dipartimento di Informatica, Sistemistica e Comunicazione,\\
        Viale Sarca 336, 20126 Milano (Italy)\\
\email{luca.manzoni@disco.unimib.it}
        \and
Universit\'e Nice-Sophia Antipolis,
Laboratoire I3S,\\
2000 Route des Colles, 06903 Sophia Antipolis (France).\\
\email{{enrico.formenti,alberto.dennunzio}@unice.fr}
          }
\begin{document}

\maketitle

\begin{abstract}
This work studies some aspects of the computational power of fully asynchronous cellular automata (ACA).
We deal with some notions of simulation between ACA and Turing Machines. In particular, we characterize the updating sequences specifying which are ``universal'', i.e., allowing a (specific family of) ACA to simulate any TM on any input. We also consider the computational cost of such simulations.
\end{abstract}
\noindent
\textbf{Keywords:}  Asynchronous Cellular Automata, Computational complexity, Turing Machines.

\section{Introduction}
Cellular Automata (CA) are a computational model widely used in many scientific fields.
A CA consists of identical finite automata arranged over a regular lattice. Each automaton updates its state on the basis of its own state and the one of its neighbors. All updates are synchronous. 

CA are particularly successful for modelling real systems~\cite{Chopard11}. However, many natural systems have a clear asynchronous behavior (think of biological processes for example~\cite{Schonfisch99}). Asynchronous CA (ACA) have been introduced in order to be able to more closely simulate these systems. Roughly speaking, they are CA in which the constraint of synchronicity has been relaxed.

According to which updating policy is chosen, the behavior of the ACA under consideration can be very different. In literature, several policies have been considered (purely asynchronous~\cite{Nakamura74}, $\alpha$-asynchronous~\cite{Regnault09}, \etc)

In this paper we consider a fully asynchronous behavior in which, as in a continuous time process, two cells are never updated simultaneously (or, equivalently, one and only one cell is updated at each time step)~\cite{Fates06,Schonfisch99}. Of course, the evolutions of such ACAs depend on the sequence of cells that are updated at each time step.   

It is well-known that CA are capable of universal computation (see~\cite{Ollinger11} for an up-to-date survey). Various mechanisms have been introduced to show how ACA can emulate classical CA or circuits~\cite{Nakamura74,Nehaniv02,Peper05,Worsch10}. As a consequence of these results, Turing universality of ACA is also proved. 

In this paper we focus on the way by which an ACA  simulates a TM.
We analyse two different modes: strict simulation and scattered strict
simulation. Roughly speaking, strict simulations pretend that the ACA
exactly reproduces the steps of the TM (up to some encoding) admitting at most that some time is wasted between two steps of the TM. The second mode is essentially the same as strict simulation but considers the case that only a subset of cells can be used to perform the simulation, the others being ``inactive''. We characterize all updating schemes allowing these two simulation modes. Moreover, we show that in both cases, the time slowdown due to asynchronicity is quadratic w.r.t. the running time of the TM under simulation.

\section{Basic Notions}
Let $A$ be an alphabet. A \emph{CA configuration} (or simply configuration) is a function from $\Z$ to $A$. A \emph{cellular automaton} is a structure $(A, r,\lambda)$ where $A$ is the \emph{alphabet}, $r$ is the  \emph{radius}, and $\lambda:A^{2r+1}\to A$ is the  \emph{local rule}. The local rule is applied synchronously to all positions of a configuration. In other words, the local rule induces a global rule $f:\az\to\az$ defined as follows:
\[
\forall c\in\az, \forall i\in\Z, \quad f(c)_i=\lambda(c_{i-r}, \ldots, c_{i+r})
\]
Where $\az$ is the set of all configurations and for each $i \in \Z$, the image $c(i)$ is denoted by $c_i$. Let $\lambda$ be a local rule of radius $r$. Consider now the following asynchronous updating of a configuration. At each time $t$, $f$ is applied on one and only one position. We denote by $\theta$ the sequence $(\seq_{t})_{t>0}$ whose generic element $\theta_t$ is the position which is updated at time $t$.
\begin{definition}
A \emph{fully asynchronous cellular automaton (ACA)} is a quadruple $(A,\lambda,r,\seq)$ where $A$ is a finite alphabet, $\lambda:A^{2r+1}\mapsto A$
is the local rule of radius $r\in\mathbb{N}$ and $\seq=(\seq_{t})_{t>0}$, with $\seq_{t}\in\mathbb{Z}$
is a sequence of cell positions.
\end{definition}

Every ACA $\mathcal{A} = (A,\lambda,r,\theta)$ induces a dynamical behavior described as follows.
The evolution of any configuration $c\in A^\mathbb{Z}$ is the sequence of configurations 
\[
\lbrace f^{0}(c), f^{1}(c),\, f^{2}(c),\, \ldots, f^{t}(c),  f^{t+1}(c), \ldots\rbrace,
\] where $f^0(c)=c$ and, for any $t\in\N$, $f^{t+1}(c)$ is defined as
\[
\forall i\in\Z,\qquad f^{t+1}(c)_{i}=
\begin{cases}
\lambda(f^{t}(c)_{i-r},\ldots,f^{t}(c)_{i+r}) & \text{if }i=\theta_t\\
c_{i} & \text{otherwise}
\end{cases}
\]
We stress that dynamical evolutions in asynchronous CA depend on the choice of the updating policy. Different updating schemas have been considered for studying asynchronicity in CA settings~\cite{Nakamura74,Fates06,Regnault09}. We deal with the fully asynchronous situation in which at each time only one cell is updated according to $\theta$. Note that even for some relatively simple rules the behavior under fully asynchronous updating is far to be simple, see for instance~\cite{Peper04,Regnault09}. From now on, for a sake of simplicity, by referring to ACA we will mean fully asynchronous CA.    

A  \emph{Turing Machine} \M is a 7-tuple $(Q,\Sigma,\Gamma,\mb,\delta,q_0,F)$, where $Q$ is the set of \emph{states}, $\Sigma \subset \Gamma$ is the \emph{input alphabet}, $\Gamma$ is the \emph{working alphabet} and 
$\mb \in\Gamma\setminus\Sigma$ is the \emph{blank symbol}. The map $\delta: Q \times \Gamma \mapsto Q \times \Sigma \times \{L,R\}$ is the \emph{transition function}, where $L$ and $R$ denote the left and right movements of the head, $q_0 \in Q$ is the \emph{initial state} and $F \subseteq Q$ the \emph{set of final states} (see~\cite{hopcroft79}, for an introduction on this subject). An
\emph{instantaneous configuration} $c$ of \M is a triple $(T,q,p)$, where $T \in \Gamma^\mathbb{Z}$ is the content of the tape, $q \in Q$ is the current state of \M and $p \in \mathbb{Z}$ is the position of 
its head. 

A \emph{run} of \M on the initial input $x\in\Sigma^*$ is
the sequence $R_t=\{(T_t,q_t,p_t)\}_{t\in\Z}$ where 
$(T_0,q_0,p_0)=(^{\omega}\mb x\mb^{\omega},q_0,0)$ (\ie the symbol $\mb$ is repeated infinitely many times at the left and at the right of $x$)for $t=0$, and for any $t\in\N$,
$(T_{t+1},q_{t+1},p_{t+1})$ is the instantaneous configuration of \M at time $t+1$, where $T_{t+1}$ is equal to $T_t$ except that in position $p_t$ in which the symbol $(T_t)_{p_{t}}$ is replaced by the symbol $s$ with $(q_{t+1}, s, X)=\delta(q_t, (T_t)_{p_t})$, where $p_{t+1}=p_t + 1$ if $X=R$ and $p_{t+1} = p_t - 1$ if $X=L$. 
Remark that if in a run it happens that at some time $t\in\N$ the state $q_t\in F$, then, for any $k>t$, $R_k=R_t$, i.e., in other words the computation halts on the instantaneous configuration $R_t$ and the output is the non blank content of tape $R_t$.

\section{Simulation of Turing Machines}
It is well-known that CA are a universal computational model according to different notions of universality (for a survey 
see~\cite{Ollinger11}). The main point to prove
universality is to simulate a TM. Of course, one can apply similar
ideas and constructions to prove computational universality or computational capability of ACA (see for example~\cite{Nakamura74,Nehaniv02,Peper05}). In this section we would like to precise the computational cost of such simulations.
\medskip

The basic idea when simulating a TM using an ACA is to act by ``extracting'' first the information about the state of the TM from the current configuration of the ACA, and then to operate the TM transition saving information on the ACA configuration again. The way of saving the TM state and the way we extract it from the current configuration lead to the two following notions of simulation.

\paragraph{Notation.}
For $a,b\in\N$ with $a<b$, denote $[a,b]$ the set of integers
between $a$ and $b$ (including $a$ and $b$).
Given the finite sets $A_1,A_2,\ldots,A_n$, for any element
$(a_1,a_2,\ldots,a_n)\in A_1\times A_2\times\ldots\times A_n$, 
define $\Pi_{A_i}((a_1,a_2,\ldots,a_i\ldots,a_n))=a_i$.
These projection maps can be naturally extended to work with
configurations, indeed given a configuration 
$c\in(A_1\times\ldots\times A_n)^{\Z}$,
for any $i\in[1,n]$ and $j\in\Z$, define $\Pi_{A_i}(c)_j=\Pi_{A_i}(c_j)$.
\smallskip

Given a configuration $c\in\az$ and a function $\psi:\Z\to\Z$, $c^{\psi}$
is the configuration defined as $c^{\psi}_i=c_{\psi(i)}$ for all $i\in\Z$.
\begin{definition}
Let $\M = (Q,\Sigma,\Gamma,\mb,\delta,q_0,F)$ be a TM and $\A = (A,\lambda,r,\theta)$ be an ACA. \A \emph{strictly simulates} \M iff $A = \Gamma \times B$ for some finite set $B$ and for any input $x\in\Sigma^*$ of \M, there exists a configuration $c\in\az$ satisfying the following conditions:
\begin{enumerate}
  \item $\Pi_{\Gamma}(c)= \cfgpat{
  \mb x\mb}$ and $\Pi_{B}(c)= \cfgpat{sus}$ for some $u,s\in B$;\\
  \item for any time $t\in\N$, there exists $t'\in\N$ such that 
	\[\Pi_{\Gamma}(f^{t'}(c))=T_t,\]
 \item for any pair of times $t_1,t_2\in\N$,
 \[
 t_1<t_2\Rightarrow t^\prime_1<t^\prime_2,
 \] 
 where $t^\prime_i=\min\{k\in\N: \Pi_{\Gamma}(f^{k}(c))=T_{t_i}\}$, $i=1,2$.
\end{enumerate}
\end{definition}
In other words, an ACA \A strictly simulates a TM \M if its configurations can represent in a direct way the tape of \M, possibly
using an additional amount of information (stored in the alphabet B) and some additional time. Relaxing the condition on the representation of the tape, the following weaker notion of simulation is obtained.
\begin{definition}\label{def:weakly}
Let $M = (Q,\Sigma,\Gamma,\mb,\delta,q_0,F)$ be a TM. Let $\A = (A,\lambda,r,\theta)$ be an ACA. \A \emph{scattered strictly simulates} \M iff $A = \Gamma \times B$ for some finite set $B$, and for any input $x\in\Sigma^*$ of \M there exist an injective increasing function $\psi: \mathbb{Z} \mapsto \mathbb{Z}$ and a configuration $c\in\az$ satisfying the following conditions:
\begin{enumerate}
	\item 
	\begin{enumerate}
  \item $\Pi_{B}(c^{\psi})= \cfgpat{sus}$, for some $u,s\in B$\enspace;
  \item $\forall i\in\Z\setminus\psi(\Z), \Pi_{B}(c_i)=q$, for some $q\in B$\enspace;
  \item $\Pi_{\Gamma}(c^\psi) = \cfgpat{\mb x\mb}$\enspace;
  \end{enumerate}
  \item for any time $t\in\N$, there exists $t'\in\N$ such that
	 \[\Pi_{\Gamma}((f^{t'}(c))^{\psi})=T_t\] 
\item 
for any  pair of times $t_1,t_2\in\N$,
 \[
 t_1<t_2\Rightarrow t^\prime_1<t^\prime_2,
 \] 
 where $t^\prime_i=\min\{k\in\N: \Pi_{\Gamma}((f^{k}(c))^\psi )=T_{t_i}\}$, $i=1,2$.
\end{enumerate}
\end{definition}
A scattered strict simulation assumes that only a subset of cells participates to the simulation and the others are somehow inactive. For this reason, in the ACA configurations there can be an offset made for example by $\mb$s between the symbols of the TM tape content.  
Note that when the function $\psi$ is $\psi(i) = i$ then scattered strict simulation and strict simulation coincide.
\smallskip

According to the above definitions, even if an ACA can
simulate a TM on a fixed input $x$, it might not be able to simulate
the same TM on a different input simply because of an inappropriate
updating sequence $\theta$.

\subsection{Construction 1.} Given a TM $\M = (Q,\Sigma,\Gamma,\mb,\delta,q_0,F)$ build a family of ACA $\A_{\theta}=(A,\lambda,1,\theta)$
such that $A=\Gamma \times Q \times D \times C$, where $D=\set{L,R}, C=\set{0,1,2}$, and the local rule
$\lambda: A^3\to A$ is defined as follows
\[
	\lambda(u,v,z) =
		\begin{cases}
			(\sigma,q,m,0) & \begin{array}{l} \text{if } u = (\sigma_u,q_u,R,1), v = (\sigma_v,q_v,m_v,2) \\
			 \text{and } \delta(q_u,\sigma_v) = (q,\sigma,m)\end{array}\\
			 (\sigma_v,q_v,R,2) & \text{if } v= (\sigma_v,q_v,R,1), z = (\sigma_z,q_z,m_z,0) \\
			(\sigma_v,q_v,m_v,1) & \text{if } u = (\sigma_u,q_u,R,2), v = (\sigma_v,q_v,m_v,0) \\

			(\sigma,q,m,0) & \begin{array}{l} \text{if } z = (\sigma_z,q_z,L,1), v = (\sigma_v,q_v,m_v,2) \\
			 \text{and } \delta(q_z,\sigma_v) = (q,\sigma,m)\end{array}\\
			 (\sigma_v,q_v,L,2) & \text{if } v= (\sigma_v,q_v,L,1), u = (\sigma_u,q_u,m_u,0) \\
			(\sigma_v,q_v,m_v,1) & \text{if } z = (\sigma_z,q_z,L,2), v = (\sigma_v,q_v,m_v,0) \\
                           v&\text{otherwise.}				
\end{cases}
\]

\begin{figure}
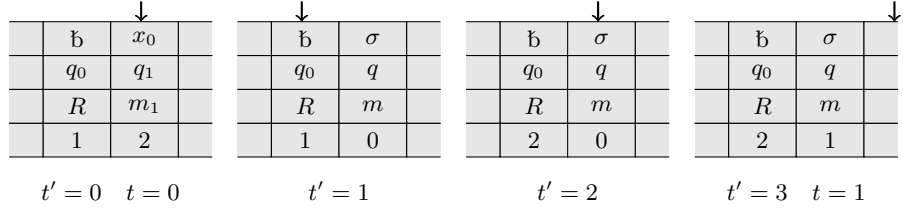

	\begin{center}
	\TMsym{$1$}{$2$}{$m_1$}{$q_1$}{$x_0$}{$t'=0$\quad $t=0$}{1}
	\TMsym{$1$}{$0$}{$m$}{$q$}{$\sigma$}{$t'=1$}{0}
	\TMsym{$2$}{$0$}{$m$}{$q$}{$\sigma$}{$t'=2$}{1}
	\TMsym{$2$}{$1$}{$m$}{$q$}{$\sigma$}{$t'=3$\quad$t=1$}{2}
	\end{center}
	\caption{Simulation of the first step of a TM using an ACA built by construction 1 with updating sequence $\theta= (0,-1,0,1,\ldots)$. The ACA and TM times are denoted by $t'$ and $t$, respectively. The arrow points at the current active cell of the ACA.}
\label{fig:TM_sim}
\end{figure}

Every cell of the ACA contains the symbol of the corresponding cell on the TM tape, the state of the TM, the direction of movement of the TM head, and a value $\xi\in C$ to control the simulation. At TM time $t$, the cell $i$ with $\xi=1$ is the one where the TM head is positioned at time $t-1$, i.e., $p_{t-1}=i$ (with $p_{-1}=-1$). During the ACA evolution, at most one cell in whole configuration has $\xi=1$. If the updating sequence allows the cell $i+1$ (resp., $i-1$) to be updated and the cell $i$ has $m=R$ (resp.,  $m=L$), then the cell $i+1$ (resp., $i-1$) changes its state according to the TM rule and its own value of $\xi$ is set to $0$ to indicate that the information about the head position has to be moved to this cell. To perform it, at subsequent times ACA will set the cell with $\xi=1$ to $2$ and the cell with $\xi=0$ to 1. 
An example of this behavior is shown in Figure \ref{fig:TM_sim}.
\smallskip

In order to (strictly) simulate a TM on input $x=x_0\ldots x_{n-1}\in\Sigma^*$, ACA given by the above construction have to start on the following configuration $c\in A^{\Z}$:
\[
         \forall i\in\Z,\;
	c_i = 
		\begin{cases}
			(x_i,q,m,2) & \text{ if } 0 \le i < |x|. \\
			(\mb,q_0,R,1) & \text{ if } i = -1.\\
			(\mb,q,m,2) & \text{ otherwise,}
		\end{cases}
\]
where $q$ and $m$ are an arbitrarily chosen state and movement (since they will not be used in the simulation, their choice can be arbitrary). The last point to precise is which updating sequences
can be used. Of course, that depends on the TM to simulate but there are sequences that can be used in ``all occasions'', they are called universal. An updating sequence $\theta$ is \emph{universal}  iff
$|\{i\in\Z,\theta_i=k\}|=\infty$ for every $k\in\Z$; practically speaking a sequence is universal if any cell is updated infinitely many times. 

\begin{theorem}
\label{teo:TM_strict_sim}
An ACA $\A = (A,\lambda,1,\theta)$ given by Construction $1$ is Turing universal if and only if $\theta$ is universal.
\end{theorem}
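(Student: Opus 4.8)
The statement is an equivalence, and I expect the necessity of universality to be routine while sufficiency concentrates all the difficulty. I will treat the two implications separately, reading ``Turing universal'' as ``for every TM \M\ and every input $x$, the ACA obtained from \M\ by Construction~$1$ with sequence $\theta$ strictly simulates \M''; this seems to be the only reading under which the equivalence can hold, since a single benign machine could otherwise be simulated even by a non-universal sequence.

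For necessity I argue by contraposition. If $\theta$ is not universal there is a cell $k$ and a time $t'_0$ with $\theta_{t'}\neq k$ for all $t'\geq t'_0$, so every component of cell $k$ — in particular $\Pi_{\Gamma}(f^{t'}(c))_k$ — is constant for $t'\geq t'_0$. I would then exhibit one machine $\mathcal{M}^{*}$ that defeats the construction: a two-state ``flip'' machine whose head oscillates over the tape position aligned with cell $k$ and writes two distinct symbols $a\neq b$ there at infinitely many steps, so that $(T_t)_k$ differs from the frozen value at infinitely many TM-times $t$. By condition~$3$ the matching times $t'_t$ are strictly increasing, hence unbounded, so all but finitely many satisfy $t'_t\geq t'_0$; for such $t$ we would get $\Pi_{\Gamma}(f^{t'_t}(c))_k\neq (T_t)_k$, contradicting condition~$2$. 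Thus a non-universal $\theta$ already fails on $\mathcal{M}^{*}$.

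For sufficiency, fix \M, an input $x$, and the start configuration $c$ of Construction~$1$. The plan is to define an invariant, preserved by every single update, that singles out the \emph{clean} reachable configurations encoding instantaneous TM configurations: exactly one cell carries the marker $\xi=1$ (at $p_{t-1}$, holding the current state and the direction toward $p_t$), all others carry $\xi=2$, and $\Pi_{\Gamma}$ equals $T_t$. I would then describe how each TM-step is realised by a three-phase handshake — write-and-set-$0$, retract the old marker to $2$, promote the new cell to $1$ — through cases $(1,2,3)$ for a right step and $(4,5,6)$ for a left step, checking that $\Pi_{\Gamma}$ is altered exactly once per step, at the write phase, so that the tapes $T_0,T_1,\ldots$ are produced in order.

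The hard part is preservation of the invariant under an \emph{arbitrary} update. I must show that updates applied out of order, or to inactive cells, are inert, and above all that the promotion rules (cases 3 and 6) cannot fire prematurely and create a second $\xi=1$ cell: once two markers coexist the old one can never be retracted, since its retraction rule requires its neighbour to carry $\xi=0$, and the configuration is irreparably corrupted. Ruling this out forces the invariant to control the direction field of the cell just beyond the active head — it must not point back toward the head — a property I would carry along inductively and, if necessary, secure by letting still-unvisited cells point away from the origin. This six-branch case analysis is where essentially all the work lies. Granting it, progress is immediate: from a clean configuration at a non-final TM-time, the cells touched by the handshake are each updated infinitely often because $\theta$ is universal, so its three phases eventually fire in the only admissible order and yield the clean configuration for the next step. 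Induction on $t$ then delivers an increasing sequence of first matching times $t'_0<t'_1<\cdots$ with $\Pi_{\Gamma}(f^{t'_t}(c))=T_t$, i.e.\ conditions~1--3; as \M\ and $x$ were arbitrary, \A\ is Turing universal.
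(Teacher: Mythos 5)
Your necessity direction is correct and is essentially the paper's own argument: the paper uses a zig-zag machine whose head revisits every tape cell infinitely often and alternates the symbol it writes there, which is exactly your ``flip'' machine, and your explicit derivation of the contradiction from conditions~2 and~3 (increasing, hence unbounded, matching times against a frozen cell) is if anything more careful than the paper's one-line conclusion.

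The gap is in sufficiency, and it sits exactly where you placed it. The strengthened invariant you propose to carry --- \emph{the cell just beyond the active head must not point back toward the head} --- is not preserved by the dynamics, so the six-branch case analysis you defer cannot be completed. The reason is that retraction (cases~2 and~5) keeps the retracting cell's own direction component, i.e.\ the direction in which the head departed. Hence after the head has swept rightward over a region, every cell of that region carries $(\cdot,\cdot,R,2)$; when the machine later turns around and comes back leftward (as the zig-zag machine does on every sweep), at each left move the freshly written cell, carrying $\xi=0$, has a left neighbour of the form $(\cdot,\cdot,R,2)$, and case~3 is enabled \emph{before} the old marker has retracted. Your fallback of initializing unvisited cells to point away from the origin is irrelevant here: the dangerous directions are created by the simulation itself, on cells the head has already visited. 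Nor is the premature promotion harmless: if a universal $\theta$ updates the freshly written cell a second time before touching the old marker, one obtains two adjacent cells with $\xi=1$; the old one can never retract, and as soon as the head must pass that cell again (or immediately, if the next move reverses direction) one reaches two adjacent markers pointing at each other --- each retraction rule waits for the other cell to hold $\xi=0$, every write rule needs a target with $\xi=2$, so the configuration freezes while the simulated tape keeps changing, violating condition~2.

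You should also know that the paper's proof does not close this hole either: it simply asserts that ``no changes in the configuration of the ACA occur'' between the three designated update times, which is precisely the statement your invariant was meant to secure, and which is false in the scenario above. So your instinct that all the work lies there is right, but the difficulty is not one that a cleverer invariant can remove: since the bad configurations are genuinely reachable under suitable universal $\theta$, sufficiency quantified over \emph{all} universal sequences cannot be established for Construction~1 as written. A repair has to touch the construction or the statement --- for instance, enlarging $C$ so that a written cell records from which side it was written (making promotion triggerable only by the retraction of the correct neighbour), or restricting $\theta$ to sequences, like the one in Proposition~1, whose update order never revisits the written cell before the old marker has retracted.
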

\begin{proof}
Consider a TM $\M = (Q,\Sigma,\Gamma,\delta,\mb, q_0,F)$ and an ACA 
$\A = (A,f,1,\theta)$ with $\theta$ universal. For any $x\in\Sigma^*$ input of \M, let $c$ be the initial configuration built in construction $1$. Let us prove that \A strictly simulates \M.
\smallskip

Let $R_t = (T_t,q_t,p_t)$ be the configuration of the \M at time $t$. We claim that for all $t\in\N$ there exists $t'\in\N$ such that the configuration $c'=f^{t'}(c)$ of \A has the following properties
\begin{enumerate}
	\item $\Pi_{\Gamma}(c') = T_t$.
	\item $\Pi_{Q}(c'_{p_{t-1}}) = q_t$  \quad($p_{-1}=-1$)
	\item $\Pi_{C}(c'_{p_t}) = 1$ and $\Pi_{C}(c'_i)\ne 1$, for all $i\in\Z\setminus\{p_t\}$.
	\item $\Pi_{D}(c'_{p_t}) = R$ if $p_t > p_{t-1}$; $L$ if $p_t < p_{t-1}$. 
\end{enumerate}

We proceed by induction. For $t = 0$ the claim is true by construction ($t'=0$ and $c'=c$). 
Assume that the claim is true for $t>0$, \ie,  there exists $t'$ such that the configuration $c'=f^{t'}(c)$ satisfies the four stated properties. Remark that $\Pi_{C}(c'_{p_t}) = 1$ and hence the only cells that
can change their value are at positions $p_t+1$ or $p_t-1$, depending on the value of $\Pi_{D}(c'_{p_t})$.
Assume that $\Pi_{D}(c'_{p_t})=R$ (the other case is similar). Since $\theta$ is universal, there exists
$t''>t'$ such that $\theta_{t''}=p_t+1$ and for any other $\bar{t}\in\N$ either $\theta_{\bar{t}}\ne p_t+1$ or
$\bar{t}>t''$. According to the definition of $f$, at time $t''$ the cell $p_t+1$ will become
$(\sigma,q,m,0)$, where $(\sigma,q,m)=\delta(\Pi_{Q}(c''_{p_t}), \Pi_{\Gamma}(c''_{p_t+1}))$ and
$c''=f^{t''}(c)$. Moreover, no other cell can change its content between time $t'+1$ and $t''-1$.
Therefore $\Pi_{\Gamma}(c'') = T_{t+1}$ and $\Pi_{Q}(c''_{p_t+1}) = q_{t+1}$ \ie $c''$ satisfies the
first and second properties.

Again, since $\theta$ is universal, there exists
$t'''>t''$ such that $\theta_{t'''}=p_t$ and for any other $\bar{t}\in\N$ either $\theta_{\bar{t}}\ne p_t+1$ or
$\bar{t}>t'''$. According to $f$, the forth component of the cell at position $\theta_{t'''}$ is set to $2$.
Once more, remark that no changes in the configuration of the ACA occur between time
$t''+1$ and $t'''-1$.

Finally, by the universality of $\theta$, there exists $\tilde{t}>t'''$ such that $\theta_{\tilde{t}}=p_t+1$ and for any other $\bar{t}\in\N$ either $\theta_{\bar{t}}\ne p_t+1$ or $\bar{t}>\tilde{t}$. From the definition of $f$, one deduces that the only possibility is that the fourth component of the cell at position $p_t+1$ in $f^{\tilde{t}}(c)$ is set to $1$.  Since no changes in the configuration of the ACA occur between time $t'''+1$ and $\tilde{t}-1$, the claim is proved. 

To prove the inverse implication, consider the TM
\[
\M=(\set{q_R,q_L},\set{0,1},\set{0,1,\mb},\mb, \delta,q_R,\emptyset)
\]
where $\delta$ is defined as follows
\begin{center}
\begin{tabular}{c|c|c|c|c|c|c}
$(q,\sigma)$&$(q_R,\mb)$&$(q_R,0)$& $(q_R,1)$&$(q_L,\mb)$&$(q_L,0)$&$(q_L,1)$ \\
\hline
$\delta(q,\sigma)$&$(q_L,1,L)$&$(q_R,1,R)$&$(q_R,1,R)$&$(q_R,0,R)$&$(q_L,0,L)$&$(q_L,0,L)$\\
\end{tabular}
\end{center}

On any input, \M writes a symbol $1$ on the cell $0$, then it writes $1$s towards the right until a blank symbol is reached. When a blank is reached, it moves left writing a symbol $0$ until a blank is encountered at this point it starts moving right writing $1$s and so forth. It is clear that the head of \M passes through
any cell of the tape infinitely many times. Therefore any ACA given by Construction 1 needs an universal updating sequence to strictly simulate it.\qed
\end{proof}

The previous result proves that the class of ACA given by Construction $1$ are computational universal
but it seems that requiring an universal updating sequence involves a considerable time loss (see the proof of the Theorem\ref{teo:TM_strict_sim}). The following proposition shows that there exist (carefully chosen) updating sequences such that the time loss is acceptable (quadratic).

\begin{proposition}\label{prop:exec1}
Given a TM \M that executes in time $T(n)$, there exists an ACA \A given by Construction 1 that simulates 
\M in time $O\left( T(n)^2 \right)$.
\end{proposition}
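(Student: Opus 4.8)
The plan is to exhibit a single, input-independent updating sequence $\theta$ that performs expanding back-and-forth sweeps over larger and larger intervals, and to show that each such sweep wastes only a linear factor of time per simulated TM step. First I would localise the activity: on an input of length $n$ the machine $\M$ runs for $T(n)$ steps, so its head never leaves the window $[-W,W]$ with $W=O(T(n))$, since it starts at the origin and each step moves it by a single cell. The cells outside this window keep their initial (hence correct) content throughout, so they never need to be updated. By the analysis in the proof of Theorem~\ref{teo:TM_strict_sim}, the ACA image of one TM step consists of exactly three \emph{effective} updates localised at the head, namely writing the pattern $p\pm1,\,p,\,p\pm1$ (destination cell, current cell, destination cell again) in that order, every other update being left a no-op by the local rule. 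Hence it suffices to guarantee that these three positions are eventually hit in the prescribed order.

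Next I would take for $\theta$ the concatenation, over $k=1,2,3,\ldots$, of the back-and-forth sweeps
\[
S_k:\quad -k,\,-k+1,\,\ldots,\,k-1,\,k,\,k-1,\,\ldots,\,-k+1,\,-k ,
\]
the sweep $S_k$ costing $O(k)$ updates. Such a $\theta$ is independent of the input, as required, and it is universal in the sense defined before Theorem~\ref{teo:TM_strict_sim}, since every cell $i$ is visited during every stage $k\ge|i|$ and thus infinitely often.

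The heart of the argument is the following claim: once the stage satisfies $k\ge W$, so that each pass sweeps across the whole of $[-W,W]$, a constant number of passes suffices to simulate each further TM step. Indeed, a right move requires the pattern $p+1,\,p,\,p+1$: a right-to-left pass hits $p+1$ then $p$, firing the first two cases of $\lambda$, and the next left-to-right pass hits $p$ then $p+1$, firing the third case and completing the move; the left move is symmetric, and a possible misalignment between the pass direction and the move direction costs at most one extra pass. Thus each TM step is completed within $O(1)$ passes.

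Finally I would add up the cost. Reaching stage $W$ takes $\sum_{k=1}^{W}O(k)=O(W^2)$ updates, and from that point $O(T(n))$ further full sweeps complete the remaining steps, all occurring within stages $k\le W+O(T(n))=O(T(n))$; hence the ACA time $t'$ at which $\Pi_{\Gamma}(f^{t'}(c))=T_{T(n)}$ is bounded by $\sum_{k=1}^{O(T(n))}O(k)=O\!\left(T(n)^2\right)$, as claimed. The main obstacle is the claim of the previous paragraph: one must verify, case by case and across the instants where the head reverses direction, that a bounded number of passes always suffices, and that stray updates on cells with control value $2$ or $0$ that are not adjacent to the unique cell carrying $\xi=1$ never spuriously trigger a rule, so that no progress is lost and no incorrect transition is performed.
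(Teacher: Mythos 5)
Your proposal is correct and takes essentially the same approach as the paper: an explicit, input-independent updating sequence made of expanding sweeps, in which each TM step costs $O(1)$ sweeps of length $O(T(n))$, so the total cost is $\sum_{k=1}^{O(T(n))} O(k) = O\left(T(n)^2\right)$. The only difference is cosmetic: the paper's sweeps $s_i$ are parity-striped (stepping by $2$) and grouped into blocks $s_i s_{i-1} s_i$ so that block $i$ simulates exactly TM step $i$, which removes the need for your window bound $W$ and the amortized ``$O(1)$ passes per step once $k\ge W$'' argument, but the underlying mechanism and the quadratic accounting are identical.
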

\begin{proof}
Let $s_i$ with $i \in\N$ be the finite sequence given by all the integers between $-i$ and $+i$ with step $2$ (e.g., $s_2 = (-2,0,2)$ and $s_1 = (-1,1)$). Consider the sequence $\theta$ given by the concatenation of the $s_i$ sequences:
\[
\theta = \underbrace{s_0 (-1) s_0} \; \underbrace{s_1 s_0 s_1} \; \underbrace{s_2 s_1 s_2} \ldots = \underbrace{0,-1,0},\underbrace{-1,1,0,-1,1},-2,0,2,-1,1,\ldots
\]
Clearly, $\theta$ is universal. Consider $\A = (A,\lambda,1,\theta)$ where $\lambda$ and $A$ are as in Construction $1$.
It is easy to verify that every block $s_i s_{i-1} s_i$ simulates one step of \M. The size of the block increases by a (multiplicative) constant $3$ for every $i$. The total length of the simulation is then bounded by 
\[
1+|s_0|+\sum_{i = 1}^{T(n)} 2\cdot|s_i|+|s_{i-1}|=
2 + \sum_{i = 1}^{T(n)} 3i + 2 = 
\frac{3}{2}\left(T(n)^2 + \frac{7}{3}T(n)+\frac{4}{3}\right) = O\left(T(n)^2\right)
.
\]
\end{proof}
\begin{remark}
The time $O\left(T(n)^2\right)$ is the better asymptotic limit for a sequence. Indeed, consider a sequence such that the corresponding ACA simulates in time $O(g(T(n))) \subseteq O(T(n)^2)$ any TM which executes in time $T(n)$. We show that $O(g(T(n))) = O(T(n)^2)$.
Consider  all the possible movement of the TM head. Its position is $0$ at time $0$, either $1$ or $-1$, at time $1$, and so on. Let us introduce the  graph $G=(V,E)$ where $V=\{0,\ldots,T(n)\}\times\{-T(n),\ldots,T(n)\}$ and $E = \{((t,a),(t+1,b)) \;|\; t \in \{0,\ldots,T(n)-1\} \text{ and } |a-b|=1\}$. Any path starting from $(0,0)$ and ending to $(T(n),b)$ with $b \in \{-T(n),\ldots,T(n)\}$ represents a possible sequence of a TM head. In order to simulate all the head movements, all the nodes of the graph must be visited at least once by these paths. Since the graph has $O(T(n)^2)$ vertexes, the time $O(g(T(n)))$ can not be less than $O(T(n)^2)$.
\end{remark}
\begin{remark}
Actually, for any a TM \M which executes in time $O(T(n))$, there are uncountably many ACA given by Construction 1 that simulate \M in time $O(T(n)^2)$. An infinite set of them is individuated by   the sequences obtained from the one illustrated in the Proposition 1  ``inserting''  in it any integer in one or more positions. 
\end{remark}
The slowdown in the simulation of TM using ACA given by Construction $1$ is essentially given by the
fact that we want a strict simulation and we must keep track (among other things) of the position of the head of the TM. Relaxing this last constraint brings to a different notion of simulation and to a different
construction.
\subsection{Construction 2}
Given a TM $\M = (Q,\Sigma,\Gamma,\mb, \delta,q_0,F)$ build a family of ACA $\A_{\theta}=(A,\lambda,1,\theta)$
such that $A=\Gamma \times Q \times D \times C$,
where $D=\set{L,R}, C=\set{1,2}$ and $\lambda: A^3\to A$ is defined as follows
\[
	\lambda(u,v,z) =
		\begin{cases}
			(\sigma,q,m,1) & 
					\begin{array}{l} 
						\text{if } u = (\sigma_u,q_u,R,1),
						\Pi_{\Gamma}(v)=\sigma_v,\\
						\Pi_C(z)\ne 1
						\text{ and } \delta(q_u,\sigma_v) = (q,\sigma,m)
					\end{array}\\
			(\sigma',q',m',1) & 
				\begin{array}{l}
						\text{if } z = (\sigma_z,q_z,L,1), \Pi_{\Gamma}(v)=\sigma_v,\\
						\Pi_C(u)\ne 1		
						\text{ and } \delta(q_z,\sigma_v) = (q',\sigma',m')
				\end{array}\\
			(\Pi_{\Gamma}(v),\Pi_Q(v),\Pi_D(v),2) & \text{otherwise.} \\
		\end{cases}
\]
The initial configuration for starting the simulation is the same as the one given for Construction $1$. As already remarked before, this
family does not strictly simulate \M since it does not keep track
of the head position. However, using similar techniques
as in Theorem~\ref{teo:TM_strict_sim}, one can prove that for
$\theta=s_0s_1s_2...$, the ACA $(A,\lambda,1,\theta)$ simulates \M
on any input with a total running time
\[
\sum_{i=0}^{T(n)} |s_i|=\sum_{i=0}^{T(n)} (i+1)=\frac{1}{2}
\left(T(n)^2 + 3T(n)+2\right)=
O\left(T(n)^2\right)
\]
where $T(n)$ is the running time of \M on the input $x$.

\ignore{
Whenever ACA are used to simulate less complex machines, it is clear that it is possible to avoid
the quadratic slowdown seen above. For example consider the case of deterministic finite automata
(DFA) and the following construction (for more on DFA see\ \cite{hopcroft79} for example).

\subsection{Construction 3}
Given a DFA $\mathcal{D}=(Q,\Sigma,\delta,q_0,F)$, where $Q$ is the set of states, $\Sigma$ the alphabet, $\delta: Q \times \Sigma \mapsto Q$ the transition function, $q_0$ the initial state and $F \subseteq Q$ the set of final states. Build the family of ACA $\A = (A,\lambda,1,\theta)$ with $A = Q\times\Sigma$ and
$\lambda\left(u,v,z \right) = (\delta(u),\Pi_{\Sigma}(v))$. In order to simulate the DFA on input $x\in\Sigma^*$, ACA given by Construction $3$ have to be started on the following configuration
\[
	\forall x\in\Sigma^*\forall in\in\Z,\;c_i =
		\begin{cases}
			(q_0,x_0) & \text{if } i = |x|-1\\
			(q,x_i) & \text{if } 0 \le i < |x|-1 \\
			(q,\sigma) & \text{otherwise,}
		\end{cases}
\]
where $q\in Q$ and $\sigma\in\Sigma$ can be chosen arbitrarily. The word $x$ is recognized by the ACA iff
$\Pi_Q(f^{|x|}(c)_0)\in F$. Using $\theta=$.
\begin{proposition}
There exists an ACA \A given by construction 3 which simulates $\mathcal{D}$ in $|x|$ steps for any
input $x\in\Sigma^*$.
\end{proposition}
\begin{proof}
Led $\mathcal{A} = (A,f,r,a)$ be an ACA with $A = Q \times \Sigma$, $r = 1$, $a = 1,2,3,\ldots$ and $f\left( (q_{i-1},\sigma_{i-1}),(q_{i},\sigma_i), (q_{i+1},\sigma_{i+1}) \right) = (\delta(q_{i-1},\sigma_{i-1}),\sigma_i)$. Let $x$ be a non-empty word. The initial condition is then:

The final state of $D$ is the first component of the $|x|$-th cell after $|x|$ steps.
\end{proof}

By using $\mathcal{P}(Q) \times \Sigma$ as the alphabet we can also simulate not-deterministic finite state automata in time $|x|$. This shows that if we restrict our scope to object less powerful than TM, it is possible to avoid a time complexity penalty.
}
\subsection{Construction 3}
Construction $1$ and $2$ assume that potentially all cells of the
ACA cooperate to the simulation of the TM. Assume now that only
a subset of cells participates to the simulation and the others are
somehow inactive. In this section, we are going to show that the
ACA can still (scattered strictly) simulate any TM whenever the updating
sequence has some specific properties.

A set $S\subset\Z$ is \emph{syndetic} if there exists some finite $E\subset\Z$ such that
$\cup_{n\in E}(S-n)=\Z$, where $(S-n)=\set{k\in\Z\,|\,k+n\in S}$. Syndetic sets have bounded gaps \ie there exists $g\in\N$ (which depends on $S$) such that for any $h\in\Z$, $\set{h,h+1,\ldots,h+g}\cap S\ne\emptyset$. Given a sequence $\alpha=\set{\alpha_i}_{i\in\Z}$, the \emph{support} of $\alpha$ is the set $\supp(\alpha)=\cup_{i\in\Z}\set{\alpha_i}$.

\emph{Notation.} 
To shorten up the notation in what follows, given an ordered sequence
of states $u^{(-r)},\ldots,u^{(0)},\ldots,u^{(r)}$, denote $E_R(k)=\set{i\in[1,r]\,|\,\Pi_C(u^{(i)})=k}$ and similarly $E_L(k)=\set{i\in[-r,-1]\,|\,\Pi_C(u^{(i)})=k}$. Finally, denote
$j_R=\min E_R(k)$ if $E_R(k)\ne\emptyset$ and  $j_L=\max E_L(k)$ if $E_L(k)\ne\emptyset$.

Given a TM $\M = (Q,\Sigma,\Gamma,\delta,q_0,F)$ build a family of ACA $\A_{\theta}=(A,\lambda,r,\theta)$
such that $A=\Gamma \times Q \times D \times C$,
$D=\set{L,R}$, $C=\set{0,1,2,3}$. The local rule $\lambda: A^{2r+1}\to A$ is defined as follows
\[
	\lambda(u^{(-r)},\ldots,u^{(0)},\ldots,u^{(r)}) =
		\begin{cases}
			(\sigma,q,m,0) & \begin{array}{l} \text{if } E_L(1)\ne\emptyset,\\
                            u^{(j_L)} = (\sigma_u,q_u,R,1),\\
                            u^{(0)} = (\sigma_v,q_v,m_v,2), \\
			 \text{and } \delta(q_u,\sigma_v) = (q,\sigma,m)\end{array}\\[11mm]
			 (\sigma_v,q_v,R,2) & \begin{array}{l}  \text{if } u^{(0)} = (\sigma_v,q_v,R,1), \\
                            E_R(0)\ne\emptyset,\\
                            \text{and } u^{(j_R)} = (\sigma_z,q_z,m_z,0) \end{array}\\[8mm]
			(\sigma_v,q_v,m_v,1) & \begin{array}{l} \text{if }  E_L(2)\ne\emptyset,\\
                            u^{(j_L)} = (\sigma_u,q_u,R,2),\\  
                            u^{(0)} = (\sigma_v,q_v,m_v,0) \end{array}\\[7mm]
			(\sigma,q,m,0) & \begin{array}{l} \text{if }  E_R(1)\ne\emptyset,\\
                           u^{(j_R)} = (\sigma_z,q_z,L,1),\\
                            u^{(0)}  = (\sigma_v,q_v,m_v,2) \\
			 \text{and } \delta(q_z,\sigma_v) = (q,\sigma,m)\end{array}\\[8mm]
			 (\sigma_v,q_v,L,2) &\begin{array}{l}  \text{if } u^{(0)} = (\sigma_v,q_v,L,1), \\
                            E_L(0)\ne\emptyset,\\
                            u^{(j_L)}  = (\sigma_u,q_u,m_u,0)\end{array} \\[8mm]
			(\sigma_v,q_v,m_v,1) & \begin{array}{l} \text{if } E_R(2)\ne\emptyset,\\
                           u^{(j_R)} = (\sigma_z,q_z,L,2),\\
                           u^{(0)} = (\sigma_v,q_v,m_v,0)\end{array}\\[6mm]
                           u^{(0)}&\text{otherwise.}				
\end{cases}
\]
In order to be able to (scattered strictly) simulate a TM on input $x_0\ldots x_{n-1}\in\Sigma^*$, ACA given by the above construction have to be started on the following configuration $c$
\[
        \forall i\in\Z,\,
	c_i^{\alpha} =
	\begin{cases}
		(\mb, q_0, R, 1) & \text{if } i = \alpha_{-1}\\
		(x_i, q, m, 2) & \text{if } i\in[\alpha_0,\alpha_{n-1}]\\
		(\mb, q, m, 2) & \text{if } i\in\supp(\alpha)\setminus[\alpha_0,\alpha_{n-1}]\\
		(\mb,q,m,3)& \text{otherwise}
	\end{cases}
\]
where $\alpha$ is a subsequence of $\theta$ such that $\alpha_0<\alpha_1<\ldots<\alpha_n$,
and $q\in Q$, $m\in D$ are arbitrarily chosen. Clearly, the whole construction (and hence the simulation) depends on $\theta$ and its subsequence
$\alpha$. The following result characterizes them.

\begin{theorem}
An ACA $\A = (A,\lambda,r,\theta)$ given by Construction $3$ scattered strictly
simulates any TM on any input if and only  if $\theta$ contains an universal subsequence
$\alpha$ such that $\supp(\alpha)$ is a syndetic set.
\end{theorem}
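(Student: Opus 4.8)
The statement is an equivalence, so I would prove the two implications separately, reusing as much of the machinery of Theorem~\ref{teo:TM_strict_sim} as possible. Throughout, write $S=\supp(\alpha)$ and let $(\alpha_i)_{i\in\Z}$ be the increasing enumeration of $S$, so that the natural candidate for the injection of Definition~\ref{def:weakly} is $\psi(i)=\alpha_i$. The two conditions on $\alpha$ play complementary roles: syndeticity (bounded gaps) controls \emph{space}, guaranteeing that consecutive active cells $\alpha_i,\alpha_{i+1}$ stay within the radius $r$ so that the local rule can relay a signal from one to the other; universality controls \emph{time}, guaranteeing that every active cell is updated infinitely often so the simulation never stalls. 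A preliminary observation used in both directions is that a cell whose control component is $3$ is frozen: none of the six non-trivial cases of $\lambda$ can fire when $\Pi_C(u^{(0)})=3$, so the rule returns $u^{(0)}$. Hence the inactive cells $\Z\setminus\supp(\alpha)$ never change, and all information is confined to, and transmitted between, active cells.

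For the direction $(\Leftarrow)$, assume $\alpha$ is universal and $S$ is syndetic with gap bound $g$, and take the construction radius $r\ge g$ so that $\alpha_{i+1}-\alpha_i\le r$ for every $i$. Starting from the initial configuration $c$ prescribed by Construction $3$ (input symbols on $\alpha_0,\dots,\alpha_{n-1}$, the head marker $\xi=1$ on $\alpha_{-1}$, control value $3$ on inactive cells), I would prove by induction on the TM time $t$ the exact analogue of the four invariants in the proof of Theorem~\ref{teo:TM_strict_sim}, read through $\psi$: there is an ACA time $t'$ with $\Pi_{\Gamma}((f^{t'}(c))^{\psi})=T_t$, the unique active cell carrying $\xi=1$ is $\alpha_{p_t}$, it carries the correct movement direction, and $\alpha_{p_{t-1}}$ carries state $q_t$. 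The inductive step is the same three-phase hand-over as before (write-and-spawn a $\xi=0$ cell on the adjacent active position, demote the old $\xi=1$ cell to $2$, promote the new cell from $0$ to $1$), except that the ``adjacent'' cell is now the \emph{nearest} active cell in the relevant direction, selected by $j_R=\min E_R(\cdot)$ or $j_L=\max E_L(\cdot)$. Here both hypotheses are used exactly: because the gap is at most $r$ the relevant neighbour $\alpha_{p_t\pm1}$ lies inside the radius-$r$ window and is therefore seen by $\lambda$, and because $\alpha$ is universal each of the three positions involved is eventually selected by $\theta$, so each phase eventually completes while, by the freezing observation and the invariant that at every moment at most one active cell carries the value $1$ and at most one carries the value $0$, no spurious transition occurs in between. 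This yields conditions $2$ and $3$ of Definition~\ref{def:weakly}; condition $1$ holds by construction of $c$ and $\psi$.

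For the direction $(\Rightarrow)$, I would argue by contraposition using the specific machine $\M$ of Theorem~\ref{teo:TM_strict_sim}, whose head sweeps back and forth and hence crosses every pair of adjacent tape positions and visits every tape cell infinitely often, rewriting its content each time. Assume $\A$ scattered strictly simulates this $\M$, with injection $\psi$ and active set $S=\psi(\Z)$, and let $\alpha$ be the subsequence of $\theta$ consisting of the updates landing in $S$. First, $S$ must be syndetic: if some gap satisfied $\alpha_{i+1}-\alpha_i>r$, then, all cells strictly between being frozen inactive cells, the window $[\alpha_{i+1}-r,\alpha_{i+1}+r]$ would miss $\alpha_i$, so no application of $\lambda$ could ever transport the head marker from $\alpha_i$ to $\alpha_{i+1}$; but the simulated head does move from tape position $i$ to $i+1$, contradicting condition $2$. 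Second, $\alpha$ must be universal: since $\M$ rewrites the content of every tape cell infinitely often, each active cell $\alpha_i$ must change infinitely often, which (inactive cells being frozen) forces $\theta$ to update $\alpha_i$ infinitely often; hence every element of $S$ occurs infinitely often in $\alpha$, so $\supp(\alpha)=S$ is syndetic and $\alpha$ is universal.

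The step I expect to be most delicate is the bookkeeping in the inductive hand-over of the $(\Leftarrow)$ direction. Because active cells are no longer contiguous, I must verify that the selectors $j_L,j_R$ always return the intended neighbour $\alpha_{p_t\pm1}$ rather than some other active cell inside the radius-$r$ window; this rests on maintaining, as part of the induction hypothesis, that globally at most one active cell carries $\xi=1$ and at most one carries $\xi=0$, so that within any window the nearest active cell with a given transient control value is unambiguous. Checking that each of the six cases of $\lambda$ fires exactly when intended, and never fires spuriously on a frozen or already-settled cell during the (possibly long) stretches of $\theta$ between the three decisive updates, is the routine but error-prone core of the argument; everything else reduces cleanly to Theorem~\ref{teo:TM_strict_sim} together with the two elementary space and time observations above.
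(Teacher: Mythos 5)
Your proposal is correct at the paper's own level of rigor, and its two halves relate to the paper differently. The $(\Leftarrow)$ direction is essentially the paper's: the paper also reduces to the induction of Theorem~\ref{teo:TM_strict_sim} with the subsequence playing the role of $\theta$ and the radius chosen to cover the gaps (you in fact correct a slip there: you require $r\ge g$ for $g$ the \emph{largest} gap, whereas the paper sets the radius equal to the \emph{shortest} one, which would not suffice). The $(\Rightarrow)$ direction, however, is organized genuinely differently. The paper quantifies over all subsequences of $\theta$ and runs two separate contradiction arguments with two different test machines: the back-and-forth machine of Theorem~\ref{teo:TM_strict_sim} to show that if only finitely many cells are updated infinitely often then Condition 2 of Definition~\ref{def:weakly} fails, and then a second, ad hoc machine (writing $2b$ symbols $1$ on empty input) to exclude non-syndetic supports, a gap $[a,b]$ with $b-a>r$ isolating a first active cell $h$ past $b$ that can never change. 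You instead anchor everything to the simulation itself: you take $S=\psi(\Z)$, let $\alpha$ be the restriction of $\theta$ to $S$, and extract both properties from the single back-and-forth machine --- syndeticity because a gap exceeding $r$ leaves the left window of $\alpha_{i+1}$ filled with frozen control-$3$ cells, and universality because that machine rewrites every tape cell infinitely often while, in an ACA, a cell can change only when it is the updated one, so every cell of $S$ must occur infinitely often in $\theta$. Your route is more economical (one test machine, and the witness subsequence is produced canonically rather than via a case split over all subsequences) and it ties the gap bound directly to the radius of the given ACA; its one extra obligation, which you should state explicitly, is that the frozen region beyond a large gap also cannot be written from the \emph{right} (the case of $\lambda$ with $E_R(1)\ne\emptyset$ and a control-$1$, direction-$L$ cell), which follows by induction since no control-$0$ or control-$1$ cell ever appears beyond the gap. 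Finally, note that both your proof and the paper's implicitly fix the initial configuration to be the one prescribed by Construction 3, even though Definition~\ref{def:weakly} existentially quantifies over $c$; this is a shared simplification, not a gap specific to your argument.
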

\begin{proof}
For any TM $\M = (Q,\Sigma,\Gamma,\mb,\delta, q_0,F)$ consider an ACA 
$\A = (A,\lambda,r,\theta)$ given by Construction $3$.  Assume that \A scattered strictly simulates \M.
First of all, let us prove that $|\supp(\theta)|=\infty$. Indeed, if $|\supp(\theta)|<\infty$, only
a finite number of cells can be used for simulation and therefore, according to Condition $2$ of
Definition\ \ref{def:weakly} only a finite portion of the tape can be simulated.
Choose a subsequence $\alpha$ of $\theta$ such that $\alpha_0<\alpha_1<\ldots<\alpha_n$ (this is possible since $|\supp(\theta)|=\infty$). By contradiction, assume that no $\alpha$ is  universal. This means that the set of cells that can be updated infinitely many times is finite or empty. Without loss of generality
assume that it has finite cardinality and $j>0$ is the maximal of its elements (the case $j\leq 0$ is similar).
Let $k$ be the index of last occurrence of $j$ in $\alpha$. Consider the TM \M from the proof of
Theorem~\ref{teo:TM_strict_sim} on the empty input. Since, for all $t>k$, $f^t(c^{\alpha}))_j=c^{\alpha}_j$, Condition $2$ of Definition~\ref{def:weakly} is violated.

Now, always by contradiction, assume that there exist universal subsequences of $\theta$ but none of them has a syndetic support set. This means that there are
larger and larger sets $[a,b]\subset\N$ not contained in $\supp(\alpha)$. Choose one of them such that $b-a>r$ and let \M be the TM which on the empty input writes $2b$ symbols $1$ as an output. Set
$h=\min_{j\in\supp(\alpha)}\set{b<j}$. According to the definition of $\lambda$, for all
$t\in\N$, $f^t(c^{\alpha}))_h=c^{\alpha}_h$. Hence Condition $2$ of Definition~\ref{def:weakly} is false.
Therefore if \A scattered strictly simulates \M, $\theta$ has to contain an universal subsequence whose support is
syndetic.

On the other hand, assume that $\theta$ contains an universal subsequence $\alpha$ with
syndetic support. Then, there exists a subsequence $\alpha'$ with
$\alpha'_0<\alpha'_1<\ldots<\alpha'_n$, where $n$ is the length of the input of \M. Build the initial
configuration $c^{\alpha'}$ as described in Construction $3$. Since $\supp(\alpha)$ has bounded gaps, let $p$ be the shortest one and
set the radius $r=p$. The rest of the proof is essentially the
same as the one given for Theorem~\ref{teo:TM_strict_sim}
with $\alpha'$ playing the role of $\theta$.\qed
\end{proof}

\begin{proposition}\label{prop:exec3}
For any TM \M that executes in time $T(n)$, consider an ACA \A given by Construction 3 and an updating sequence $\theta$ containing
an universal subsequence whose support is syndetic. Then, \A
scattered strictly simulates  \M in time $O\left( T(n)^2 \right)$.
\end{proposition}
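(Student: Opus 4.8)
The plan is to take $\theta$ as an \emph{arbitrary} admissible sequence — one merely guaranteed to contain a universal subsequence $\alpha$ with $\supp(\alpha)$ syndetic — and to bound the simulation cost uniformly, never prescribing $\theta$ itself. Correctness is already in hand: by the characterization theorem for Construction $3$ the induced ACA $\A$ scattered strictly simulates $\M$ on every input, so only the timing is at issue. As in that proof I fix the radius $r$ equal to the syndetic gap $g$ of $\supp(\alpha)$, so that every active cell (a cell of $\supp(\alpha)$) sees its two nearest active neighbours inside its neighbourhood. The difference from the existential reading is exactly that here $\theta$ is given and I will \emph{extract} an efficient schedule from it, rather than build a favourable $\theta$.

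First I would isolate the updates that matter. Reading off the local rule of Construction $3$, a cell initialised with $C$-component $3$ always falls into the \emph{otherwise} branch, and an active cell whose $C$-component and neighbourhood fail to match one of the six triggering patterns is likewise left unchanged; hence an update alters $f^{t}(c^{\alpha})$ only when it fires one of the six cases, i.e.\ at the single cell the token-passing mechanism is waiting for. Call these updates \emph{productive} and the rest \emph{idle}. The decisive point is that idle updates are transparent: the sequence of distinct configurations produced along the run — and therefore the times $t'_t=\min\set{k:\Pi_{\Gamma}((f^{k}(c))^{\psi})=T_t}$ defining the simulation — is determined solely by the order in which the productive updates occur, irrespective of how many idle updates $\theta$ interleaves. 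This observation is what decouples the bound from the particular $\theta$ and makes the statement uniform.

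Next I would exhibit, inside the given $\theta$, the schedule that drives the simulation. Since the updating sequence is fixed in advance, it is oblivious to the head's input-dependent trajectory (cf.\ the graph argument in the remark following Proposition~\ref{prop:exec1}), so at phase $t$ I must be prepared to touch all $O(t)$ active cells the head could have reached. Because $\alpha$ is universal, each active cell occurs in it infinitely often, so I can greedily extract from $\alpha$ (hence from $\theta$) a subsequence $\alpha'$ realizing the analogue of the blocks $s_0,s_1,s_2,\ldots$ of Proposition~\ref{prop:exec1}, now sweeping the active cells in the widening range instead of $\Z$. By the three-phase argument of Theorem~\ref{teo:TM_strict_sim} (set the target neighbour to $C=0$, demote the old head to $C=2$, promote the neighbour to $C=1$) each block simulates one TM step, the choice $r=g$ bridging the inactive gaps so the relevant neighbour is always reachable. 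Block $t$ has length $O(t)$, and summing reproduces the estimate of Proposition~\ref{prop:exec1}, $\sum_{t=0}^{T(n)}O(t)=O\!\left(T(n)^2\right)$, uniformly over all admissible $\theta$; the remark's lower bound shows this order cannot be improved for an oblivious schedule.

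The main obstacle is the greedy extraction of $\alpha'$ together with the rigorous use of idle transparency. Universality supplies each active cell infinitely often, but I must argue that one can always choose an increasing sequence of indices in $\theta$ presenting the cells of block $t$ in the required sweeping order after block $t-1$ has completed, and that the head's actual (a priori unknown) position is necessarily covered by the swept range, so the productive update does fire within the block. I expect the delicate verification to be the invariant that at most one cell ever carries $C$-component in $\set{0,1}$ — inherited from Theorem~\ref{teo:TM_strict_sim} — which guarantees a unique pending transition at each stage and hence that the idle updates $\theta$ inserts can neither fabricate a spurious transition nor be mistaken for progress. With this in hand, idle transparency transfers the count from $\alpha'$ to the full run under $\theta$, and the simulation-relevant running time is $O\!\left(T(n)^2\right)$ uniformly in $\theta$.
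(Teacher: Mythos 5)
Your core construction and the arithmetic coincide with the paper's proof: the paper also extracts from the hypothesized universal subsequence $\alpha$ a schedule of widening sweeps $\alpha'=s_{2p}s_{4p}\ldots s_{2ip}\ldots$ (with $p$ the gap parameter of $\supp(\alpha)$, absorbed into the constant), lets each block simulate one step of \M as in Proposition~\ref{prop:exec1}, and bounds the cost by $\sum_{i=1}^{T(n)}\bigl(|s_{2ip}|+|s_{4ip}|+|s_{6ip}|\bigr)=\sum_{i=1}^{T(n)}(12ip+3)=O\bigl(T(n)^2\bigr)$. The genuine gap is in the extra layer you add on top: the claim of uniformity over all admissible $\theta$, which rests entirely on the ``idle transparency'' lemma that you yourself flag as the delicate point and then assume. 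That lemma is false for Construction~3. For the intended firing to cross the largest gap of $\supp(\alpha)$ the radius $r$ must be at least that gap, while other gaps may be as small as $1$; so one can have the head at an active cell $h$ in a state with components $(R,1)$ and two active cells $h+d_1$ and $h+d_1+d_2$ to its right with $d_1+d_2\le r$. If $\theta$ updates $h+d_1+d_2$ before $h+d_1$, then for that cell $E_L(1)=\set{-(d_1+d_2)}$, $j_L$ points at the head, $u^{(j_L)}$ has components $(R,1)$ and $u^{(0)}$ has $C$-component $2$, so the first branch of $\lambda$ fires and writes the TM transition at the \emph{wrong} tape position, corrupting $\Pi_{\Gamma}\bigl((f^k(c))^{\psi}\bigr)$. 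Note that your invariant (at most one cell with $C\in\set{0,1}$) holds \emph{before} this spurious firing and does not block it; so interleaved updates are not no-ops, and the run is not determined by the productive updates alone.

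Even granting transparency, the uniform wall-clock bound cannot hold: the times $t'$ in Definition~\ref{def:weakly} count every application of $f$, and an admissible $\theta$ may interleave, say, $2^t$ updates of a fixed inactive cell before the $t$-th productive one, so that $t'_{T(n)}$ exceeds any function of $T(n)$; your retreat to a ``simulation-relevant running time'' (counting only productive updates) silently changes the statement being proved. The paper's proof must instead be read in the same existential spirit as Proposition~\ref{prop:exec1}: it exhibits one favourable schedule $\alpha'$ inside the hypothesized $\alpha$ and measures time along that schedule (equivalently, the bound holds when the effective part of $\theta$ is $\alpha'$), never asserting the bound for every sequence satisfying the hypothesis. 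In short: your extraction of $\alpha'$ and the quadratic count are exactly the paper's argument, but the attempted upgrade to a $\theta$-uniform bound fails at the unproven --- and in general false --- transparency step, as well as at the trivial padding obstruction.
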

\begin{proof}
Consider $\A = (A,\lambda,p,\theta)$ where $\lambda$ and $A$ are as in Construction $3$. 
For $i \in\N$, let $s_i$ be as in the proof of 
Proposition~\ref{prop:exec1}. Let $\alpha$ be the subsequence of
$\theta$ given in the hypothesis and let $p\in\N$ be the minimal gap.
Consider the subsequence
$
\alpha' = s_{2p}s_{4p}\ldots s_{2ip}\ldots
$
Similarly to the proof of Proposition~\ref{prop:exec1},
$s_{2ip}s_{4ip}s_{6ip}$ can be used to encode the simulation
of a step of \M. The total length of the simulation is then bounded by 
\[
\sum_{i = 1}^{T(n)} |s_{2ip}|+|s_{4ip}|+|s_{6ip}|=
\sum_{i = 1}^{T(n)} 12ip + 3 = 
6p\left(T(n)^2 + (1+\frac{1}{2p})T(n)\right) = O\left(T(n)^2\right).
\]\qed
\end{proof}

\section{Future work}
This paper studies the simulation of Turing Machines by fully asynchronous CA. We have shown that computational universality for ACA corresponds to the condition that the ACA updating sequence contains any cell infinitely many times (universal sequence). We have also exhibited some universal sequences in order that the computational cost of the simulation is reasonable (quadratic) w.r.t the length of the run of the simulated TM.

The construction of sequences that allow quadratic simulations might
seem artificial, classifying our results in the domain of computability but of no use in practical simulations. Indeed, for example, consider the program described in~\cite{amar04}. It is used to simulate biochemical processes in cells. It is essentially based on an ACA (although the authors do not clearly state this) which represents proteins (and other chemicals) by particles. The current configuration is updated in two steps: diffusion and collision/reaction. In the diffusion
step a particle is randomly chosen, a direction is randomly chosen and then the particle makes a move in this direction. Adopting a more ``system'' based vision instead of a particle based one and passing in 1D for simplicity sake, one can represent the sequence of activations
of the different sites (which may contain a particle or not) as
$\theta_0=0,\theta_1=\theta_0+X
_1,\theta_2=\theta_1+X_2,\ldots,\theta_t=\theta_{t-1}+X_t,\ldots$, where $X_i$ are random variables with values in $\set{-1,+1}$ identically distributed (with uniform Bernoulli distribution for example).
Therefore, the current active cell $\theta_t=\sum_{i=0}X_i$ is also a random variable with values in \Z and, practically speaking, the updating sequence is a random walk (see~\cite{Kin93} for more on random walks). It is well-known that 1D random walks pass through all sites infinitely many times, therefore $\theta$ is an universal updating sequence and hence, by Theorem~\ref{teo:TM_strict_sim}, the system descrived in~\cite{amar04} is capable of Turing universal computation.

For every alphabet $A$, local rule $\lambda$ and radius $r$ it is possible to consider the class of all the ACA such in the form $(A,\lambda,r,\seq)$ where $\seq$ is generated by a 1D random walk. Then it is possible to investigate for any property the probability of finding an ACA inside that class that has the given property. In the case of simulation of Turing machines it can be proved that
for any TM \M running in time $T(n)$, there exists an ACA with updating sequences generated by random walks such that the probability of (strictly) simulating \M in time $3T(n)$ is $2^{-3T(n)}$.
The above result says that there exist a class of ACA where a particular ACA \A chosen uniformly from that class has a low probability of simulating a TM faster than deterministic ACA seen in the paper. The authors are currently investigating to see if and up to which extent the above results can be improved.

\section*{Acknoledgements}
The authors warmly thanks Thomas Worsch for pointing out relevant literature. 
\nocite{*}
\bibliographystyle{plain}
\bibliography{article.bib}

\end{document}